\definecolor{mycolor}{rgb}{0.122, 0.435, 0.698}
\newcommand{\result}[1]{%
\begin{tcolorbox}[colframe=mycolor,boxrule=0.5pt,arc=4pt,
      left=5pt,right=5pt,top=3pt,bottom=3pt,boxsep=0pt,width=\columnwidth]%
      {#1}
\end{tcolorbox}%
}
\title{Incoherence as Oracle-less Measure of Error in LLM-Based Code Generation}
\author{
    Thomas Valentin\textsuperscript{\rm 1},
    Ardi Madadi\textsuperscript{\rm 2},
    Gaetano Sapia\textsuperscript{\rm 2},
    Marcel B{\"o}hme\textsuperscript{\rm 2}
}
\newtheorem{theorem}{Theorem}[section]
\newcommand{\SampleSpace}{\Omega}
\newcommand{\Prob}{\mathbb{P}}
\newcommand{\Descr}{\mathsf{Descr}}
\newcommand{\prog}{\pi}
\newcommand{\ProgRV}{\Pi}
\newcommand{\Progs}{\mathsf{Prog}}
\newcommand{\sem}[1]{\llbracket #1 \rrbracket}
\newcommand{\gtfun}[1]{f^{\ast}_{#1}}
\newcommand{\Input}{\mathsf{Input}}
\newcommand{\Output}{\mathsf{Output}}
\DeclareMathOperator{\Err}{\mathcal{E}}
\DeclareMathOperator{\Inc}{\mathcal{I}}
\newcommand{\passone}{\texttt{pass@1}\xspace}
\newcommand{\coder}{\mathsf{Coder}}
\newcommand{\gen}{\mathsf{Gen}}
\newcommand{\toolname}{DiffTrust}
\newcommand{\artifact}{\textcolor{blue}{\url{https://github.com/mpi-softsec/difftrust}}\xspace}
\begin{document}

\maketitle

\begin{abstract}
Generating code from a natural language programming task is one of the most successful applications of Large Language Models (LLMs). Yet, the generated program may be buggy. Without an oracle, such as an existing, correct implementation or a formal specification, can we somehow estimate how likely the generated program is correct?

In this paper, we propose a measure of incorrectness, called \emph{incoherence}, that can be estimated efficiently in the absence of an oracle and allows us to establish a lower bound on the error, i.e., the probability that the LLM-generated program for that specification is incorrect.
In our experiments, our incoherence-based methodology can automatically identify about two-thirds of incorrect programs without reports of false positives for the average task.
In fact, \emph{an oracle-based evaluation of LLMs can be reliably replaced by an incoherence-based evaluation}. In particular, we find a very strong agreement between the ranking of LLMs by the number of programs deemed correct via an oracle (pass@1) and the ranking of LLMs by the number of programs deemed correct via incoherence.
\end{abstract}

%
\begin{links}
    \link{Data\,\&\,analysis}{https://github.com/mpi-softsec/difftrust}
    \link{Extended version}{https://arxiv.org/abs/2507.00057}
\end{links}

\section{Introduction}
\label{section:introduction}
LLMs have demonstrated remarkable performance on code generation tasks. Yet, confabulation remains a key concern. Models often produce syntactically correct but functionally incorrect code, raising the critical question of when such outputs can be trusted.
For instance, \citet{bugsLLM2} found that the vast majority of auto-generated programs for easy to medium LeetCode programming tasks are incorrect and explain that 57\% of those do not even properly implement the task (``algorithmic misalignment'') while another 19\% can only be fixed by changing multiple different code locations (multi-hunk).
\citet{bugsLLM} analyzed code generated in scenarios relevant to high-risk cybersecurity weaknesses and found that 40\% of the 1.7k LLM-generated programs actually contain security vulnerabilities.

While ground truth implementations or regression test suites provide a post-hoc evaluation of the generated code, they are often unavailable in real-world deployments, motivating the need for \emph{correctness proxies}---that is, mechanisms that can flag potential failures without external supervision.

\result{\emph{Can we estimate how likely an LLM-generated program is correct in the absence of an oracle?}}

Our work continues a recent stream of works addressing the confabulation problem using the \emph{disagreement} between independently sampled responses to detect untruthful or erroneous outputs \cite{manakul2023selfcheckgpt,friel2023chainpoll,li2024honestcoder,farquhar2024semantic}. A high disagreement indicates a high factual inconsistency. 

However, existing measures of disagreement provide \emph{no guarantees}; they are fundamentally heuristic in nature. 
Crucially, they can struggle to distinguish between confidently incorrect answers and correct answers generated under uncertainty---especially in complex structured domains like code, where semantics are hard to capture and where correctness is binary and unambiguous.

\begin{figure}[t]
\includegraphics[width=0.49\columnwidth]{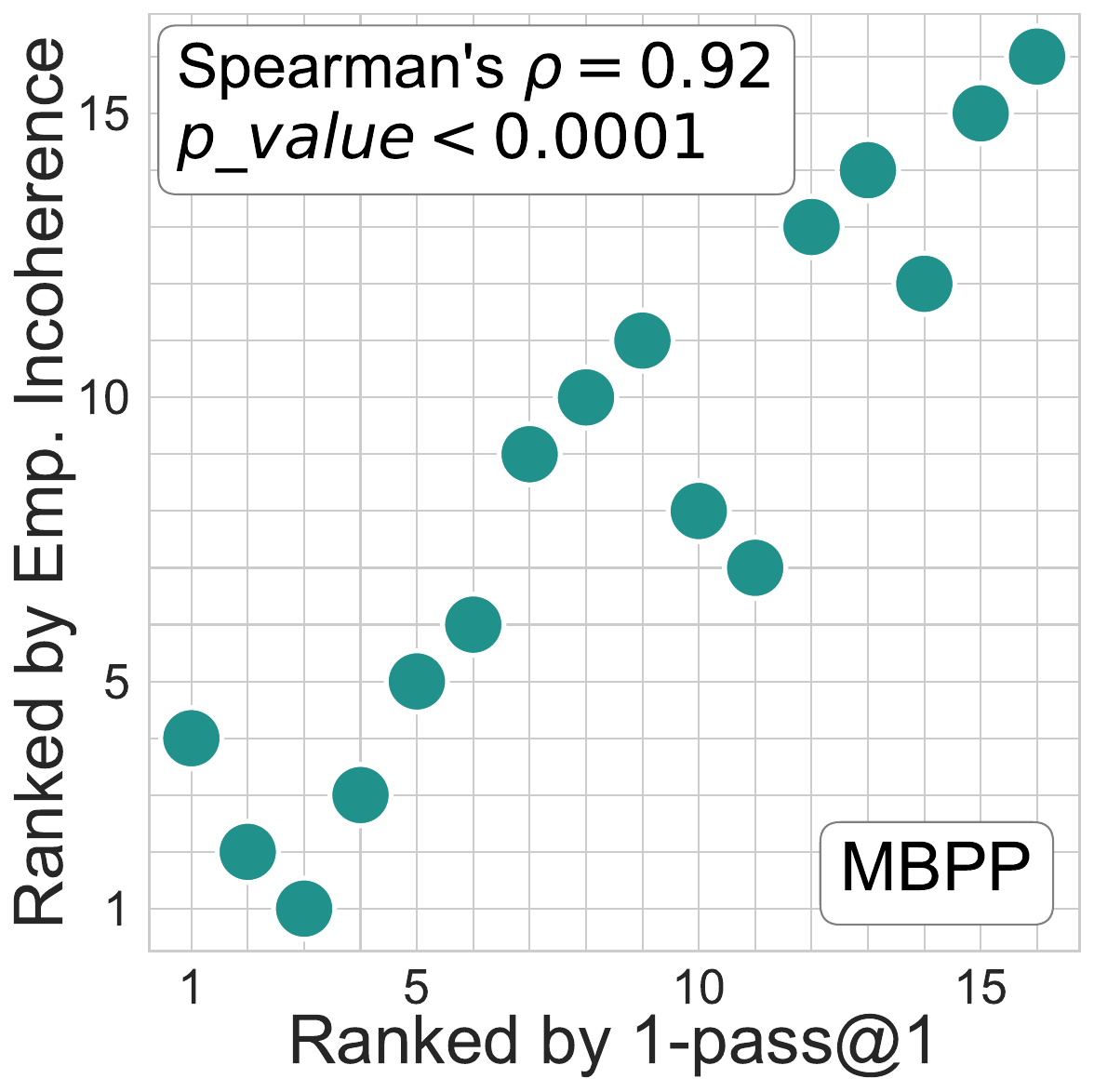}
\includegraphics[width=0.49\columnwidth]{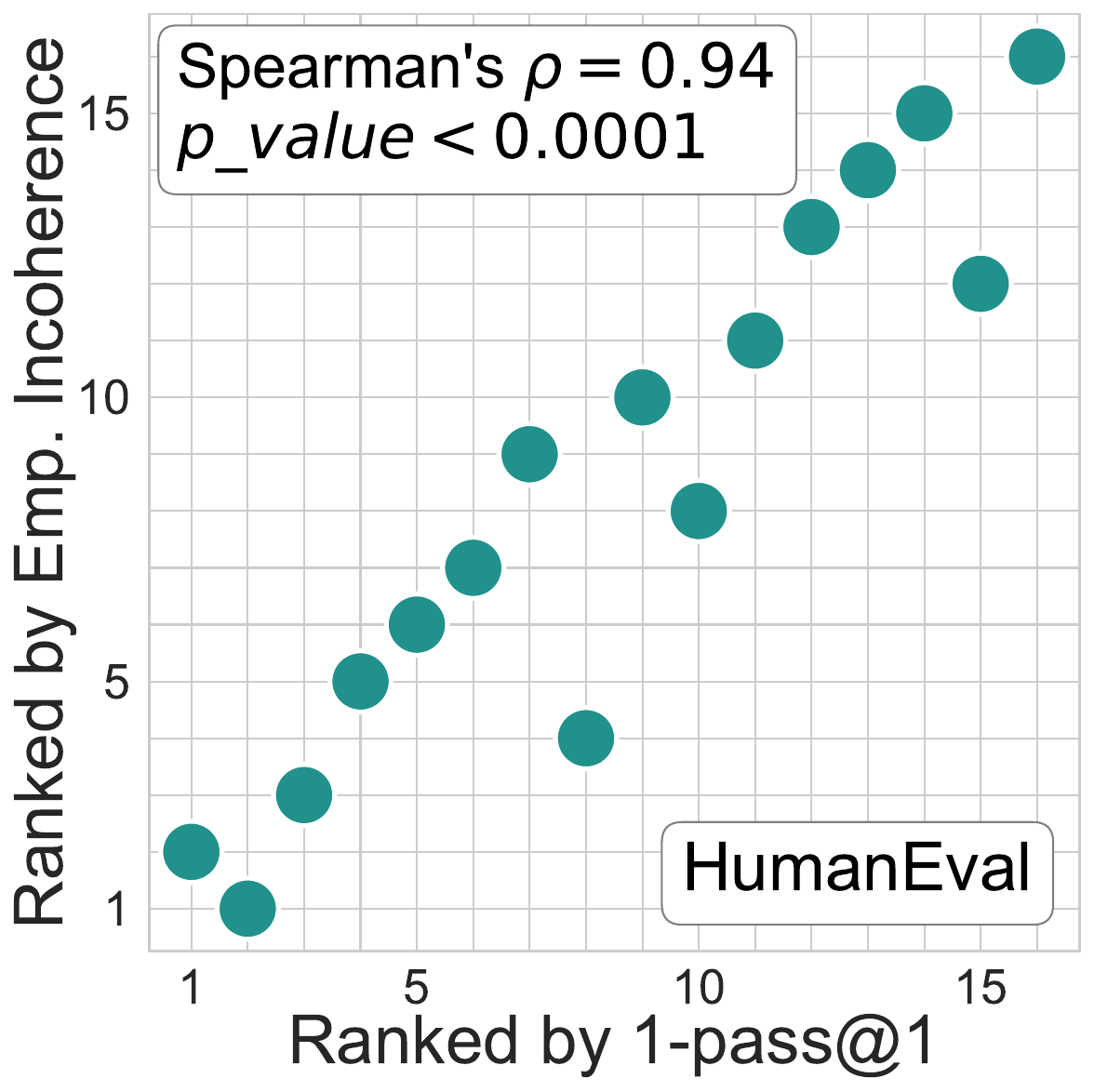}
\caption{Rankings of 16 LLMs on the two most popular code generation benchmarks, MBPP and HumanEval. ``Rank~1" indicates the highest probability of producing correct programs.
\emph{X-axis}: Ranking in terms of $[1-\passone]$ (i.e., the proportion of tasks with non-zero empirical error).
\emph{Y-axis}: Ranking in terms of the proportion of tasks with non-zero empirical incoherence. Note that incoherence can be estimated in the absence of a ground truth implementation.\vspace{-0.2cm}}
\label{fig:mainresult}
\end{figure}

Our key insight is that---in the domain of code---the disagreement between independently sampled solutions for a task can be interpreted \emph{semantically}: if two LLM-generated programs behave differently on the same input, at least one must be incorrect.
If the two programs behave identically across a representative input distribution, we gain empirical confidence in their correctness. This enables a shift from heuristic proxies to semantically grounded ones.

In this work, we formalize this intuition. We argue that incoherence, i.e., the behavioral divergence across samples, is a principled and theoretically justified proxy for model error. Concretely, given an LLM and a programming task $d$, we call the probability that any two programs generated to implement $d$ are functionally different as the LLM's \emph{incoherence} on $d$. If we are also given a ground truth implementation $\gtfun{d}$ for $d$, we call the probability that $\gtfun{d}$ and a program generated to implement $d$ are functionally different as the LLM's \emph{error} on $d$. 
We note that the widely-used \passone score \cite{chen2021evaluating,EvalPlus} on a benchmark set can be written in terms of the empirical error (i.e., the error's maximum likelihood estimate) on each task.

We develop a probabilistic framework that establishes a lower bound on the model's error in terms of incoherence. In contrast to prior work that relies on shallow patterns or internal model metrics, our approach directly leverages the one setting where semantic equivalence is exactly observable: executable code.

In experiments with 16 state-of-the-art LLMs and two popular code generation benchmarks, our incoherence measure, which requires \emph{no} ground truth implementation, works incredibly well as a substitute for \passone. Figure~\ref{fig:mainresult} shows rankings of those LLMs, both in terms of the proportion of tasks with non-zero empirical error (i.e., $[1-\passone]$) and the proportion of tasks with non-zero empirical incoherence. These rankings very strongly agree despite the absence of oracles for our incoherence measure ($\rho\ge 0.92$).
We also find that a non-zero incoherence effectively detects about two-thirds of the non-zero errors in the absence of a ground truth implementation (69\% and 66\% detection rate on MBPP and HumanEval, resp.). No false positives. In cases where the incoherence is zero, the mean error is substantially lower than the average. If we increase the number of generated programs for a programming task 5-fold (from 10 to 50), the detection rate further increases by eight (8) percentage points---of course, at the cost of a 5-fold increase in monetary expenses for additional queries to the LLM. 

\vspace{0.1cm}
\noindent
\textbf{Contributions:}
\begin{enumerate}
\item We propose \emph{incoherence}, a formal, unsupervised proxy for correctness that lower-bounds LLM error.
\item We develop a probabilistic framework linking incoherence to pass@1, with PAC-style efficiency guarantees.
\item Our study shows, incoherence detects errors reliably and yields rankings that agree with oracle-based evaluations.
\item We release all code, results, and analysis scripts.
\end{enumerate}

\section{Background}
\label{section:background}

Code generation is a primary application of LLMs in software engineering \cite{llm4SEsurvey}. Tools like Copilot have seen widespread adoption, with over 40 million installations. Today, even general-purpose LLMs are competitive coders \cite{leaderboards}. In June 2025, nearly 25\% of the 7.5 trillion tokens on~\cite{openrouter} were coding-related.

\citet{llm4codeSurvey} highlight that \emph{trustworthiness} is essential for LLM adoption in programming. Yet, LLMs are prone to generating incorrect code. How can correctness be validated without ground truth? Existing work addresses factuality in text by sampling multiple responses and measuring internal consistency—e.g., SelfCheckGPT~\cite{manakul2023selfcheckgpt}, ChainPoll~\cite{friel2023chainpoll}, and semantic entropy approaches~\cite{farquhar2024semantic}. These methods cluster outputs and estimate confidence based on entropy or token overlap. For code generation, variants include HonestCoder~\cite{li2024honestcoder}, which uses syntax and data-flow modalities, and functional equivalence methods using symbolic execution~\cite{sharma2025assessingcorrectnessllmbasedcode}. However, these approaches are fundamentally heuristic and lack formal guarantees. Their reliance on patterns or output similarity do not capture semantic correctness.

\citet{corina} are interested in generating the most likely correct program for a task, and propose to select from a pool of candidates generated by multiple LLMs that program which best aligns with consensus, where consensus is defined both syntactically and semantically.

Our work addresses the \emph{oracle problem} in software testing \cite{oracle}. While runtime crashes are detectable via sanitizers \cite{asan}, functional correctness is domain-specific and often unobservable. We address this problem by viewing the generated program as a random variable and using incoherence as a proxy for correctness.

\section{Defining Error and Incoherence}
\label{section:problem-setting}

We consider the task of automatically generating a program from a natural language specification. Formally, given a textual description $d$ of a programming task, a code generation system $\coder$---treated as a black-box stochastic process---samples a program $\prog\sim \coder(d)$ intended to satisfy the task description $d$. Our objective is to assess the correctness of these programs $\coder(d)$ without supervision, reference solutions, or access to model internals.

\subsection{Notation}
\label{section:problem-setting:notation}

We write $\Prob(\cdot)$ for probability and $\mathbb{E}[\cdot]$ for expectation, leaving the underlying probability space implicit. For any expression $\textit{expr}$ involving random variables, $\Prob(\textit{expr})$ denotes the probability of the corresponding event. We use $\mathbb{I}(\textit{expr})$ for the indicator function, which is $1$ when $\textit{expr}$ holds and $0$ otherwise.

We denote by $\Descr$ the set of textual function descriptions and by $\Progs$ be the set of programs that define a function.
Let $\sem{\cdot}$ denote the operational semantics such that for all $\prog \in \Progs$,  $\sem{\prog}$ represents the function defined by $\prog$. We refer to $\sem{\prog}$ as the functional interpretation of $\prog$. 

\subsection{Error of a Code Generation System}
\label{section:problem-setting:error-code-generation-system}

We model a code generation system $\coder$ as a function that maps each task $d\in \Descr$ to a corresponding (unknown) distribution over $\Progs$. Formally, for all $d\in \Descr$ : 
\begin{align}
\coder(d) & : \prog \in \Progs \mapsto p_{\prog}^d \in [0,1]
\end{align}%
where $p_{\prog}^d$ is the probability of obtaining $\prog$ when querying $\coder$ with task $d$. A program sampled from $\coder$ for the task $d$ is thus modelled by a random variable that follows the $\coder(d)$ distribution:\vspace{-0.2cm}
\begin{align}
\ProgRV^d \sim \coder(d).
\end{align}

For every task description $d \in \Descr$, we assume there exist an input set $\Input_d$, an output set $\Output_d$ and a correct (deterministic) ground truth implementation $\prog^{\ast}_d \in \Progs$ with its functional interpretation $\gtfun{d} := \sem{\prog^{\ast}_d}$ such that $\gtfun{d}: \Input_d \to \Output_d$.

The \passone score \cite{chen2021evaluating} is a standard metric to evaluate the performance of $\coder$.
For a finite set of tasks $S \subset \Descr$, \passone is defined as the expected fraction of sampled programs that are functionally equivalent to the ground truth implementation:\vspace{-0.1cm}
\begin{align}\label{eq:passone}
\text{\passone}(S) := \mathbb{E}\left[\frac{1}{|S|} \sum_{d \in S} \mathbb{I}\bigl(\sem{\ProgRV^d} = \gtfun{d}\bigr)\right].
\end{align}

We define the \textbf{functional error} of $\coder$ on task $d$ as the complement of \passone computed for a single task $d$, i.e., the probability that the generated program is not functionally equivalent to the ground truth:\vspace{-0.1cm}
\begin{equation}\label{eqn:one}
\Err(d) := \Prob\left(\sem{\ProgRV^d} \neq \gtfun{d}\right) = 1 - \text{\passone}(\{d\}).
\end{equation}
This definition captures the natural notion of error.

Moreover, we introduce a \emph{probabilistic interpretation of correctness} with respect to (w.r.t.) a distribution of inputs. Rather than asking whether the generated function is correct for \emph{all} inputs, which is undecidable due to Rice's theorem, we ask whether it is correct for a \emph{typical input}, drawn from a distribution that represents expected usage. We use this probabilistic interpretation of correctness to introduce a pointwise notion of the error such that \emph{a non-zero pointwise error implies a non-zero functional error} (cf. Eq.~(\ref{eqn:one})).

We model an input generation system $\gen$ as a function that maps each task $d\in \Descr$ to an (unknown) probability distribution  over the corresponding input set $\Input_d$. This distribution might represent how the program is executed under a typical workload. Formally, for all $d\in \Descr$ :
\begin{align}
\gen(d) & : x \in \Input_d \mapsto p_{x}^d \in [0,1]
\end{align}%
where $p_{x}^d$ intuitively models how likely is a function for $d$ to be called on input $x\in \Input_d$.

We define the \textbf{pointwise error} of $\coder$ w.r.t. $\gen$ for any task $d\in \Descr$ as
\begin{equation}\label{eqn:pntwiseError}
\Err_\gen(d) := \Prob(\sem{\ProgRV^d}(X) \neq \gtfun{d}(X))
\end{equation}%
where $X \sim \gen(d)$.
 
While the functional error can be computed only by verification of functional equivalence (an undecidable problem), the pointwise error can be estimated efficiently (c.f. Appendix~\ref{appendix:estimation-inc-err:monte-carlo-estimation-err})---\emph{in the presence of the oracle} $\gtfun{d}$.

We note that a non-zero pointwise error implies a non-zero functional error, i.e.,
\begin{align}\label{eqn:seven}
\left(\Err_\gen(d) > 0\right) \implies (\Err(d) > 0).
\end{align}

Our pointwise error $\Err_\gen(d)$ models the practical reality that a program might be correct on almost all inputs that are empirically observed when the program is tested, deployed, or used in practice.
By evaluating the probability of failure on a representative input distribution, the pointwise error provides a meaningful and practical estimate of the model's reliability in practical scenarios.
The pointwise error also formalizes the experimental setup originally proposed and now widely used to estimate \passone (i.e., the complement of the mean functional error on a fixed set of programming tasks) using a fixed set of random test cases \cite{chen2021evaluating,EvalPlus}.

\section{Incoherence of a Code Generation System}
\label{section:incoherence-code-generation-system}

Our core challenge is to estimate the pointwise error $\Err_\gen(d)$ \emph{in the absence of the oracle} $\gtfun{d}$, i.e., without supervision. We aim to achieve this using only observations from sampled implementations, without relying on any internal details of $\coder$.
To this end, we specialize the disagreement-based hallucination detection approach \cite{manakul2023selfcheckgpt} to the domain of code generation. The precise definition of the (probabilistic) correctness of a program w.r.t. an oracle (i.e., a ground truth implementation) provides us with the unique opportunity to formalize the approach and to introduce actual \emph{probabilistic guarantees}.

We define the \textbf{pointwise incoherence} of $\coder$ w.r.t. an input generation system $\gen$ and a task $d$ as the probability that two independently sampled programs produce different outputs on a generated input, i.e.,\vspace{-0.1cm}
\begin{equation}\label{eqn:pntwiseIncoherence}
\Inc_\gen(d) := \Prob\left(\sem{\ProgRV^d_1}(X) \neq \sem{\ProgRV^d_2}(X)\right)\\[-0.2cm]
\end{equation}
where  $\ProgRV^d_1, \ProgRV^d_2 \overset{iid}{\sim} \coder(d)$ are two independently sampled programs and $X \sim \gen(d)$ is an input sampled from $\gen$ for task $d$.
This quantity captures the model’s internal uncertainty as revealed through behavioral divergence. Crucially, $\Inc_\gen(d)$ is fully observable and efficient to estimate without an oracle (see Appendix~\ref{appendix:estimation-inc-err:monte-carlo-estimation-inc}).

In the following, we show that this notion of pointwise incoherence provides a rigorous lower bound on the pointwise error and that it can be efficiently estimated.
In Appendix~\ref{appendix:functional-incoherence}, we develop the notion of functional incoherence and establish a lower bound on the functional error in terms of the functional incoherence in parallel.

\subsection{\hspace{-0.1cm}Lower\,Bound\,on\,Error\,in\,Terms\,of\,Incoherence}
\label{section:incoherence-code-generation-system:key-theoretical-result}
The pointwise incoherence provides a lower bound on the pointwise error. Intuitively, if two programs disagree on an input, at least one must be wrong; therefore, the probability of disagreement places a floor on the probability of failure.

\begin{theorem}[Pointwise Incoherence Inequality]
\label{theorem:incoherence-inequality}
\[
\forall \gen, \forall d \in \Descr,\quad \Inc_\gen(d) \leq 2 \times \Err_\gen(d).
\]\vspace{-0.6cm}
\end{theorem}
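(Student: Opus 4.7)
The plan is to observe that the inequality is essentially a union bound: if two programs produce different outputs on the same input, at least one of them must disagree with the ground truth on that input. So the event underlying incoherence is contained in the union of two error events, one for each sampled program, and both of those events have probability exactly $\Err_\gen(d)$ by the definition of the pointwise error together with the fact that $\Pi_1^d$ and $\Pi_2^d$ are marginally distributed as $\coder(d)$.

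Concretely, I would first fix a sample point $\omega \in \Omega$ and argue the pointwise inclusion
\[
\{\sem{\Pi_1^d}(X) \neq \sem{\Pi_2^d}(X)\} \subseteq \{\sem{\Pi_1^d}(X) \neq \gtfun{d}(X)\} \cup \{\sem{\Pi_2^d}(X) \neq \gtfun{d}(X)\},
\]
which holds because if $\sem{\Pi_1^d}(X)(\omega)$ and $\sem{\Pi_2^d}(X)(\omega)$ are distinct, they cannot both equal $\gtfun{d}(X)(\omega)$. Taking probabilities and applying subadditivity then gives
\[
\Inc_\gen(d) \leq \Prob\!\left(\sem{\Pi_1^d}(X) \neq \gtfun{d}(X)\right) + \Prob\!\left(\sem{\Pi_2^d}(X) \neq \gtfun{d}(X)\right).
\]

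Next I would identify each term on the right with $\Err_\gen(d)$. Even though incoherence uses two programs jointly with a shared input $X$, the key point is that for each $i \in \{1,2\}$ the pair $(\Pi_i^d, X)$ has the same joint distribution as $(\Pi^d, X)$ in the definition of $\Err_\gen(d)$, since $\Pi_1^d$ and $\Pi_2^d$ are each marginally $\coder(d)$-distributed. Hence both probabilities equal $\Err_\gen(d)$, and summing yields $\Inc_\gen(d) \leq 2\,\Err_\gen(d)$, as required.

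There is no real obstacle here: the argument is a three-line union bound. The only thing to be careful about is the measure-theoretic bookkeeping, making sure that the marginal law argument above is cleanly stated so that the right-hand side identifies with $2\,\Err_\gen(d)$ rather than with some other probability involving the joint distribution of $(\Pi_1^d, \Pi_2^d, X)$. I would spell this out explicitly in one sentence to avoid any ambiguity about what independence assumptions are being used.
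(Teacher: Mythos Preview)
Your proposal is correct and mirrors the paper's own proof almost exactly: both argue the pointwise inclusion $\{\sem{\Pi_1^d}(X)\neq\sem{\Pi_2^d}(X)\}\subseteq\{\sem{\Pi_1^d}(X)\neq\gtfun{d}(X)\}\cup\{\sem{\Pi_2^d}(X)\neq\gtfun{d}(X)\}$ via a sample-wise argument, apply the union bound, and identify each right-hand term with $\Err_\gen(d)$. If anything, your explicit remark that each $(\Pi_i^d,X)$ has the same joint law as $(\Pi^d,X)$ is slightly more careful than the paper, which simply asserts $\Err_\gen(d)=\Prob(E_{\Pi_1^d,\gtfun{d}})=\Prob(E_{\Pi_2^d,\gtfun{d}})$ by definition.
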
\vspace{-0.1cm}

\begin{proof}
See Appendix~\ref{appendix:proof-pointwise-incoherence-ineq}.
\end{proof}\vspace{-0.1cm}
This result establishes $\Inc_\gen(d)$ as a sound and theoretically grounded proxy for estimating model error on $d$. Unlike heuristic confidence or divergence metrics based on representation-level similarity, $\Inc_\gen(d)$ directly, precisely, and formally captures observable functional disagreement. 

Crucially, an error detection method based on our incoherence metric \emph{never produces false positives}. The inequality guarantees that if the model has zero pointwise error on a task—i.e., $\Err_\gen(d) = 0$—then its pointwise incoherence must also be zero: $\Inc_\gen(d) = 0$. This property distinguishes it from all previously proposed unsupervised proxies, which may still flag ``uncertainty'' even when outputs are correct.

\subsection{Incoherence is Efficiently Estimated}
\label{sec:samplecomplexity}

A key advantage of incoherence as a surrogate for correctness is that it can be estimated efficiently and without access to ground-truth implementations. In this section, we formalize this claim by showing that both the pointwise incoherence and the decision problem of detecting non-zero incoherence admit simple, sample-efficient Monte Carlo estimators with standard PAC-style guarantees.

\begin{theorem}[PAC Estimation]\label{thm:estimationIncoherence}
There exists a randomized algorithm that, given parameters $\delta>0$, $\epsilon>0$, code generator $\coder$, input generator $\gen$, and task  $d\in D$, computes $\bar\Inc_\gen(d)$ such that $\Prob(|\bar\Inc_\gen(d)-\Inc_\gen(d)|\le \epsilon)\ge 1-\delta$ using at most $\left\lceil \frac{\log(2/\delta)}{2\epsilon^2} \right\rceil$ samples.
\end{theorem}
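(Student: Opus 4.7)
The plan is to exhibit the obvious Monte Carlo estimator and bound its deviation via Hoeffding's inequality, which yields the stated sample complexity essentially for free because $\Inc_\gen(d)$ is by definition the expectation of a Bernoulli random variable.

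First I would describe the algorithm. Given $\epsilon, \delta > 0$, set $n = \left\lceil \frac{\log(2/\delta)}{2\epsilon^2} \right\rceil$. For each $i \in \{1,\dots,n\}$, draw independent samples $\ProgRV^{d}_{1,i}, \ProgRV^{d}_{2,i} \sim \coder(d)$ and $X_i \sim \gen(d)$ (all $3n$ draws mutually independent across $i$ and within each triple), then define
\begin{equation*}
Z_i := \mathbb{I}_{\sem{\ProgRV^{d}_{1,i}}(X_i) \neq \sem{\ProgRV^{d}_{2,i}}(X_i)}.
\end{equation*}
The algorithm returns the empirical mean $\bar\Inc_\gen(d) := \frac{1}{n}\sum_{i=1}^n Z_i$. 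Each indicator $Z_i$ is computable by executing the two sampled programs on $X_i$ and comparing outputs, which is exactly the observable event whose probability defines $\Inc_\gen(d)$ in Equation~(\ref{eqn:pntwiseIncoherence}).

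Next I would verify the probabilistic guarantee. By construction the $Z_i$ are i.i.d.\ Bernoulli random variables with $\mathbb{E}[Z_i] = \Inc_\gen(d)$, and they take values in $[0,1]$. Hoeffding's inequality therefore yields
\begin{equation*}
\Prob\bigl(|\bar\Inc_\gen(d) - \Inc_\gen(d)| > \epsilon\bigr) \le 2\exp(-2n\epsilon^2).
\end{equation*}
Substituting the chosen $n$ gives $2\exp(-2n\epsilon^2) \le 2\exp(-\log(2/\delta)) = \delta$, so the complementary event has probability at least $1-\delta$, as required.

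There is no real obstacle here; the only subtle point worth spelling out is that the randomness in each $Z_i$ comes from three independent sources ($\ProgRV^{d}_{1,i}$, $\ProgRV^{d}_{2,i}$, and $X_i$), and that the joint law of the triple gives $\mathbb{E}[Z_i] = \Inc_\gen(d)$ by Equation~(\ref{eqn:pntwiseIncoherence}); independence across $i$ is built into the sampling procedure, so Hoeffding applies without further argument. I would also briefly remark that the algorithm requires only black-box sampling from $\coder(d)$ and $\gen(d)$ and the ability to execute programs on inputs, i.e., no oracle $\gtfun{d}$ is invoked, which is the whole point of using incoherence rather than error as the estimand.
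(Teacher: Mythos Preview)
Your proposal is correct and matches the paper's own argument: the paper exhibits the same Monte Carlo estimator (Algorithm~\ref{alg:point-inc-estimation}) and invokes Hoeffding's inequality on the i.i.d.\ Bernoulli indicators to obtain the bound with the stated sample size. Your write-up is in fact slightly more explicit about the joint independence structure than the paper's version.
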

A similar theorem for the pointwise error, the randomized algorithms using Monte Carlo estimation, and the proofs for both theorems using a trivial application of Hoeffdings inequality are postponed to Appendix~\ref{appendix:estimation-inc-err:monte-carlo-estimation-err}\&~\ref{appendix:estimation-inc-err:monte-carlo-estimation-inc}.

If we are only interested in the decision problem using a boolean interpretation of correctness, a detection method offers a statistically sound and substantially more sample-efficient means to certify that $\coder$ generates correct programs for a task $d$ w.r.t. a well-specified usage distribution.

\begin{theorem}[PAC Detection]\label{thm:pacDetection}
There exists a randomized algorithm that, given parameters $\delta > 0$, $\epsilon > 0$, code generator $\coder$, input generator $\gen$, and task $d \in \Descr$, returns \texttt{true} if a disagreement is observed and \texttt{false} otherwise, such that:
\begin{itemize}
    \item If the algorithm returns \texttt{true} :  $\Inc_\gen(d) > 0$.
    \item If the algorithm returns \texttt{false} : $\Inc_\gen(d) \leq \epsilon$ with probability at least $1 - \delta$,
\end{itemize}
using at most $\left\lceil \frac{\log(\delta)}{\log(1 - \epsilon)} \right\rceil$ samples.
\end{theorem}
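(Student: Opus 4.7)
The plan is to exhibit and analyze the obvious Monte Carlo procedure: draw $n$ iid triples $(\pi_1^{(i)}, \pi_2^{(i)}, x^{(i)})$ with $\pi_1^{(i)}, \pi_2^{(i)} \overset{iid}{\sim}\coder(d)$ and $x^{(i)}\sim\gen(d)$, evaluate $\sem{\pi_1^{(i)}}(x^{(i)})$ and $\sem{\pi_2^{(i)}}(x^{(i)})$, and return \texttt{true} as soon as some trial realizes the disagreement event in Equation~(\ref{eqn:pntwiseIncoherence}), or \texttt{false} once all $n$ trials have been exhausted without disagreement. I would set the sample budget to $n := \left\lceil \log(\delta)/\log(1-\epsilon) \right\rceil$.

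For the \texttt{true} branch, the argument is essentially definitional. If the procedure returns \texttt{true}, then a concrete realization of the event $\{\sem{\ProgRV_1^d}(X) \neq \sem{\ProgRV_2^d}(X)\}$ was observed among the iid trials; since each trial is distributed according to the joint law induced by $\coder(d)\otimes\coder(d)\otimes\gen(d)$, the event necessarily has strictly positive probability under that law, which is exactly $\Inc_\gen(d)>0$. No concentration bound is needed.

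For the \texttt{false} branch, I would argue by contrapositive. Suppose $\Inc_\gen(d) > \epsilon$. Because the $n$ trials are mutually independent and each disagrees with probability $\Inc_\gen(d)$, the probability that every trial agrees -- which is precisely the probability of the algorithm returning \texttt{false} -- equals $(1-\Inc_\gen(d))^n$, and this is strictly less than $(1-\epsilon)^n$. By the choice of $n$, taking logarithms and flipping the inequality (since $\log(1-\epsilon)<0$) gives $(1-\epsilon)^n \le \delta$. Hence whenever the true incoherence exceeds $\epsilon$, the algorithm returns \texttt{false} with probability at most $\delta$; equivalently, conditioned on a \texttt{false} output, $\Inc_\gen(d)\le \epsilon$ holds with probability at least $1-\delta$, which is the stated PAC guarantee.

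There is no real mathematical obstacle here: this is a geometric-tail argument rather than a Hoeffding-style one, which is why the sample complexity is logarithmic in $1/\epsilon$ rather than quadratic. The only subtle point is bookkeeping around the sign of $\log(1-\epsilon)$ when inverting the bound, and implicitly assuming $\epsilon<1$ so that $\log(1-\epsilon)$ is defined and nonzero; for $\epsilon\ge 1$ the statement is vacuous and the budget can be taken to be $0$.
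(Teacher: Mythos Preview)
Your proposal is correct and matches the paper's proof essentially line-for-line: the paper exhibits the same Monte Carlo loop with $N = \lceil \log(\delta)/\log(1-\epsilon)\rceil$ trials, argues the \texttt{true} case via the contrapositive (if $\Inc_\gen(d)=0$ then no disagreement is ever observed), and handles the \texttt{false} case by the identical geometric-tail bound $(1-\epsilon)^N \le \delta$. One minor quibble: your final sentence, ``conditioned on a \texttt{false} output, $\Inc_\gen(d)\le\epsilon$ holds with probability at least $1-\delta$,'' conflates the PAC guarantee with a posterior statement---$\Inc_\gen(d)$ is a fixed parameter, not random---so stick with the contrapositive formulation you already gave (if $\Inc_\gen(d)>\epsilon$ then $\Prob(\texttt{false})\le\delta$), which is exactly how the paper phrases it.
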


\noindent
A randomized algorithm based on Monte Carlo estimation and the proof is provided in Appendix~\ref{appendix:estimation-inc-err:pac-detection}.

\paragraph{Implication for Error Detection.}
Although the algorithm described in Theorem~\ref{thm:pacDetection} is designed to detect \emph{non-zero incoherence}, we can use it to infer the presence of \emph{non-zero error} due to the theoretical bound established in Theorem~\ref{theorem:incoherence-inequality}, which states:
\[
\Inc_\gen(d) \le 2 \cdot \Err_\gen(d).
\]
This implies that any task $d$ for which $\Inc_\gen(d) > 0$ must satisfy:
\[
\Err_\gen(d) > 0.
\]
Therefore, when the PAC detection algorithm returns \texttt{true} with high probability, we can conclude that the error rate is also bounded away from zero. This provides a conservative but sound certificate of model error without requiring access to a reference implementation.

\section{Practical Considerations}
\subsection{Fixed Sampling Budget for $\coder(d)$}
\label{sec:fixedbudget}

In theory, pointwise incoherence can be estimated efficiently. The estimator defined by Equation~\eqref{eqn:pntwiseIncoherence} is easy to implement, parallelizable, and statistically robust. Voth incoherence estimation and detection admit PAC guarantees with low sample complexity (cf. Thm.~\ref{thm:estimationIncoherence}, Thm.~\ref{thm:pacDetection}).

In practice, however, the primary bottleneck in large-scale evaluation is not sampling from the input distribution $\gen$, which is typically inexpensive, but generating programs from $\coder(d)$, which typically requires querying an LLM. This cost can be substantial, especially when applied across a large set of tasks $d \in \Descr$.

To reduce this cost, we adopt a fixed sampling budget strategy: given a budget $m$, we draw $m$ programs $\Progs_m = \langle \prog_1, \ldots, \prog_m \rangle$ once from $\coder(d)$ and define an empirical code generator $\coder_m(d)$ as the uniform distribution over these programs:
\[
\coder_m(d) := \text{Uniform}(\Progs_m).
\]
This empirical generator approximates the original distribution $\coder(d)$ while avoiding repeated expensive LLM queries at test time. As $m$ increases, $\coder_m(d)$ converges to $\coder(d)$ in distribution, and the resulting estimates of incoherence become more faithful.

While this approximation introduces some additional variance, it is highly effective in practice. It amortizes LLM sampling costs across many evaluations, enabling scalable incoherence and error estimation. Experimentally, we find that a larger $m$ consistently yields more reliable estimates.

\subsection{Test Input Generation to Implement $\gen$}
\label{sec:inputgeneration}
Automatic software test input generation is a well-studied problem in the software engineering community.
Cast as a \emph{constraint satisfaction problem}, we can use symbolic execution to generate inputs that exercise the different paths of a program~\cite{king} or that reveal a difference between two program versions~\cite{prv}.
Cast as an \emph{optimization problem}, we can use heuristic search to generate inputs that maximize code coverage \cite{korel}.

For our purposes, we propose to use \emph{fuzzing}, an approach that mutates a set of user-provided or auto-generated seed inputs to generate new inputs. Today, fuzzing is the most successful and most widely-deployed automatic testing technique in practice \cite{fuzzing}. Like random test input generation, fuzzing is amenable to \emph{statistical guarantees}, e.g., to quantify the probability of finding a bug with the next generated input in an ongoing testing campaign that has found no bugs \cite{assurance,stads,residualrisk,ICSE26-dependency,ICLR25-unseen,reachability}.

In fact, fuzzing has recently been proposed specifically to improve the soundness of the evaluation of LLM-based code generators on the HumanEval and MBPP benchmarks \cite{EvalPlus}, where \passone (i.e., mean error across all benchmark tasks) was traditionally computed using five test inputs per task \cite{chen2021evaluating}. The technique EvalPlus constructs the input distributions $\gen$ in two stages:
\begin{enumerate}
    \item \textbf{Seed Corpus Generation}: An LLM is prompted with the specification (or the ground truth implementation) to produce a set of canonical input examples.
    \item \textbf{Type-Aware Mutation}: These examples are mutated using transformations that preserve the input types but introduce variation (e.g., altering values, shuffling list contents, varying string formats).
\end{enumerate}

We observe that Stage~1 might introduce a bias where an LLM's generated code might appear to perform better on inputs generated by the same LLM, compared to inputs generated by another LLM. Hence, in our experiments, to provide a fair evaluation of all considered LLMs, we mitigate that potential bias by using the benchmark-provided test inputs as seed inputs. In practice, in the absence of existing test inputs, we suggest using the original method or discovering the seed corpus using greybox fuzzings \cite{thinair}.

\section{Experimental Setup}
\label{section:experimental-setup}
\subsection{Research Questions}
\label{section:experimental-setup:RQ}
Our study aims to answer the following research questions.
\begin{itemize}
  \item \textbf{RQ.1 (Effectiveness)}.
  How effectively can errors be detected using incoherence alone without an oracle?
  What is the average error when incoherence is zero?
  How strong is the relationship between incoherence and error?
  \item \textbf{RQ.2 (Agreement)}.
  Does the result of an incoherence-based evaluation agree with the result of an error-based evaluation of LLMs?
  \item \textbf{RQ.3 (Ablation)}. How do incoherence and error vary as a function of a)~the number of synthesized programs, b)~the number of generated inputs, or c)~the temperature?
\end{itemize}

\subsection{Models and Datasets}
\label{section:experimental-setup:models-datasets}
\begin{table}[htbp]
\centering%
{\scriptsize
\setlength{\tabcolsep}{6pt}
\renewcommand{\arraystretch}{0.85}
\begin{tabular}{|l|l|}
\hline
Claude 4 Opus (2025/05/14)         & Claude 4 Sonnet (2025/05/14) \\
DeepSeek-Coder R1                  & DeepSeek-V3 (0324) \\
Gemini 2.0 Flash Lite              & Gemini 2.5 Pro (preview 05/06) \\
Gemini 2.5 Flash (preview 05/20)   & GPT-3.5 Turbo \\
GPT-4                              & GPT-4 Turbo \\
GPT-4o                             & GPT-o4 Mini \\
LLaMA 3.1 8B Instruct              & LLaMA 3.3 70B Instruct \\
LLaMA 4 Maverick 17B               & Ministral 8B \\
\hline
\end{tabular}
}
\caption{Large Language Models used in our experiments.}
\label{table:models}
\end{table}

\textbf{Models}. 
Table~\ref{table:models} shows the large language models (LLMs) used in our experiments. At the time of writing, these 16 LLMs represent the most successful LLMs for code generation according to several popular leaderboards \cite{leaderboards}. They also represent the current portfolio of the most popular LLM vendors: Anthropic, DeepSeek, Google, Meta, Mistral, and OpenAI.
By default, we chose a \emph{temperature of 0.6}, a value commonly used in prior work on code generation \cite{li2024honestcoder, deepseek}. We vary the temperature parameter in the ablation study (RQ3).

\textbf{Datasets}. We evaluate our measures of incoherence and error using the 16 LLMs on two (2) popular code generation benchmarks: {HumanEval} \cite{humaneval} and {MBPP} (Mostly Basic Python Problems) \cite{MBPP}.
\emph{HumanEval} is a human-written benchmark published by OpenAI in 2021, consisting of 164 programming tasks.
\emph{MBPP} is a crowd-sourced benchmark published by Google in 2022. We used the author-sanitized version of MBPP containing 426 hand-verified programming tasks.
For every task, they offer 
\begin{itemize}
\item a natural language description of the task $d$,
\item a ground-truth Python implementation $\gtfun{d}$, and
\item an average of 7.7 (and 3) Python test inputs for HumanEval (and MBPP, respectively).
\end{itemize}

\subsection{Variables and Measures}
\label{sec:variables}
Given a code generator $\coder$ and input generator $\gen$, programming task $d$, a query budget $m$ and a testing budget $n$, the \textbf{empirical error} $\hat \Err(d,m,n)$ on $d$ as estimator of the pointwise error is computed as 
\begin{equation}
    \hat{\Err}(d,m,n)=
    \frac{1}{n}
    \sum_{i=1}^n
    \mathbb{I}\bigl(\sem{\pi^d_{y_i}}(x^d_i) \neq \gtfun{d}(x^d_i)\bigr)
\end{equation}
and the \textbf{empirical incoherence} $\hat \Inc(d, m, n)$ on $d$ as estimator of the pointwise incoherence is computed as 
\begin{equation}
    \hat{\Inc}(d,m,n)=
    \frac{1}{n}
    \sum_{i=1}^n
    \mathbb{I}\bigl(\sem{\pi^d_{y_i}}(x_i^d) \neq \sem{\pi^d_{y'_i}}(x_i^d)\bigr)
\end{equation}
where $\pi^d_1,...,\pi^d_m$ are sampled from $\coder(d)$, $x_1^d,...,x_n^d$ are sampled from $\gen(d)$ and $y_1,...,y_n,y'_1, ...,y'_n$ are sampled from $\text{Uniform}(\{1,...,m\})$.

Given the set of programming tasks $S$, we can now write the \textbf{empirical \passone} score in terms of the empirical error (cf. Eq.~(\ref{eqn:one})):
\begin{align}
1-\frac{1}{|S|}\sum_{d\in S}\mathbb{I}\bigl(0\neq\hat\Err(d,1,n)\bigr)
\end{align}

The \textbf{mean empirical error} $\bar\Err(S,m,n)$ is
 $\sum_{d\in S}\frac{\hat\Err(d,m,n)}{|S|}$ while the \textbf{mean empirical incoherence} $\bar\Inc(S,m,n)$ is computed as
 $\sum_{d\in S}\frac{\hat\Inc(d,m,n)}{|S|}$.

The \textbf{detection rate} is the proportion of tasks with non-
zero emp. error that have a non-zero empirical incoherence, i.e.,
$
\frac{1}{|S_x|}\sum_{d\in S_x}\mathbb{I}\bigl(0\neq \hat\Inc(d,m,n)\bigr)
$
where $S_x=\{d\ |\ d\in S\wedge \hat\Err(d,m,n)\neq 0\}$.

The \textbf{undetected mean empirical error} is the mean empirical error of tasks with zero empirical incoherence, i.e.,
$\frac{1}{|S_u|}\sum_{d\in S_u}\hat\Err(d,m,n)$
where $S_u=\{d\ |\ d\in S\wedge \hat\Inc(d,m,n)= 0\}$.

We measure the \textbf{strength of the relationship} between two random variables, i.e., empirical incoherence and error, using Spearman's rank correlation coefficient $\rho$.
We measure the \textbf{agreement on ranking} when sorting the performance of LLMs measured by the proportion of programming tasks (a)~with zero mean empirical error versus (b)~with zero mean empirical incoherence, also using Spearman's rank correlation coefficient.

\subsection{Implementation}
\label{section:experimental-setup:implementation}

For each task, we generate $m$ candidate functions per coder (default $m=10$), using vendor APIs. Inputs are generated via mutation-based fuzzing ($n=1000$ by default). We estimate both error (w.r.t.\ ground truth $f^*_d$) and incoherence (between candidates), with all executions sandboxed (60s timeout). Experiments ran on AMD EPYC 7713P CPU (128 threads), 251~GB RAM. More details in Appendix \ref{sec:impl}.

\vspace{-0.2cm}
\section{Empirical Results}
\label{section:empirical-results}

\subsection*{RQ-1. Effectiveness}
\label{section:empirical-results:RQ1}
\begin{table}\scriptsize\centering
\begin{tabular}{@{}c@{ }c@{ }|l@{\quad}c@{ \ }|@{ \ }c@{\quad}c@{\quad}c@{\quad}c@{ }|@{}}
\cline{3-8}
&&&\textbf{Mean} &\textbf{Mean} &\textbf{Spearman}& \textbf{Detection} & \textbf{Undetected}\\
&&\textbf{LLM}& \textbf{Error} & \textbf{Incoherence} & \textbf{Correlation} & \textbf{Rate}	& \textbf{Mean Error}\\\hline
\multirow{3}{*}{ }&\multirow{3}{*}{\rotatebox[origin=c]{90}{\textbf{MBPP}}}
&  Code & 0.2960 & 0.0995 & 0.5276 & 0.6866 & 0.2071\\
& & General & 0.2773 & 0.1123 & 0.6105 & 0.7243 & 0.1638\\
& & Small & 0.3741 & 0.1641 & 0.5892 & 0.7037 & 0.2107\\\cline{3-8}
& & \textbf{Mean} & 0.3009 & 0.1203 & 0.5621 & 0.6857 & 0.1866\\
\hline\hline
\multirow{3}{*}{\rotatebox[origin=c]{90}{\textbf{Human}}}&\multirow{3}{*}{\rotatebox[origin=c]{90}{\textbf{Eval}}}
& Code & 0.0763 & 0.0295 & 0.7171 & 0.7188 & 0.0417\\
& &  General & 0.0927 & 0.0483 & 0.7181 & 0.7042 & 0.0460\\
& & Small & 0.1585 & 0.0911 & 0.7282 & 0.7381 & 0.0737\\\cline{3-8}
& & \textbf{Mean} & 0.1050 & 0.0560 & 0.6861 & 0.6616 & 0.0471\\
\hline
\end{tabular}
\caption{Performance of 3 LLMs on 2 benchmarks.
The \textbf{mean} is reported across all 16 LLMs; \texttt{Code} = Gemini 2.5 Pro, \texttt{General} = Gpt--4o, and \texttt{Small} = Ministral 8b.}
\label{tab:rq1}
\end{table}

Table~\ref{tab:rq1} shows the results for across all 16 LLMs (mean) and for three representative models (code, general, and small) for both code generation benchmarks. Table~4 (appendix) shows the results for all 16 LLMs. The measures in the header row are discussed in Section~\ref{sec:variables}. Figure~\ref{fig:rq1-relationship} shows a scatter plot illustrating the relationship between error and incoherence.

\textbf{Results}. A non-zero incoherence \emph{effectively} detects a non-zero error without access to a ground truth implementation. The mean detection rate across all 16 LLMs for MBPP and HumanEval are 69\% and 66\%, respectively. There does not seem to be a substantial difference in detection rate between the code-generation specific LLM (Gemini 2.5 Pro) and the general-purpose or the small LLM (GPT-4o, Mistral 8b).
In cases where the incoherence is zero, the mean error is also substantially lower. Concretely, the mean error reduces from 30\% to 19\% for MBPP and from 11\% to 5\% for HumanEval (``Undetected Mean Error"). For the small LLM, the mean error is generally higher than for the other LLMs, but the percentage decrease when incoherence is zero is similar.

Figure~\ref{fig:rq1-relationship} best illustrates the relationship between error and incoherence. We can clearly see the consequence of the inequality in Theorem~\ref{theorem:incoherence-inequality}. Error is usually greater than incoherence for a programming task. There are some tasks (on the left of each plot) where incoherence is zero but the error is non-zero.
For the average LLM, we find a \emph{moderate correlation} between error and incoherence (0.56 for MBPP; 0.69 for HumanEval). In Table~\ref{tab:rq1}, for the three LLMs evaluated on HumanEval, we even find a \emph{strong correlation} (gt. 0.7).

\result{\textbf{RQ-1}. \emph{A non-zero incoherence effectively detects about two-thirds of the non-zero errors in the absence of a ground truth implementation. In cases where the incoherence is zero, the mean error is substantially lower than the average. The plot of error and incoherence provides empirical confirmation for our inequality.}}\vspace{-0.1cm}

\begin{figure}\centering
\includegraphics[width=0.49\columnwidth, clip]{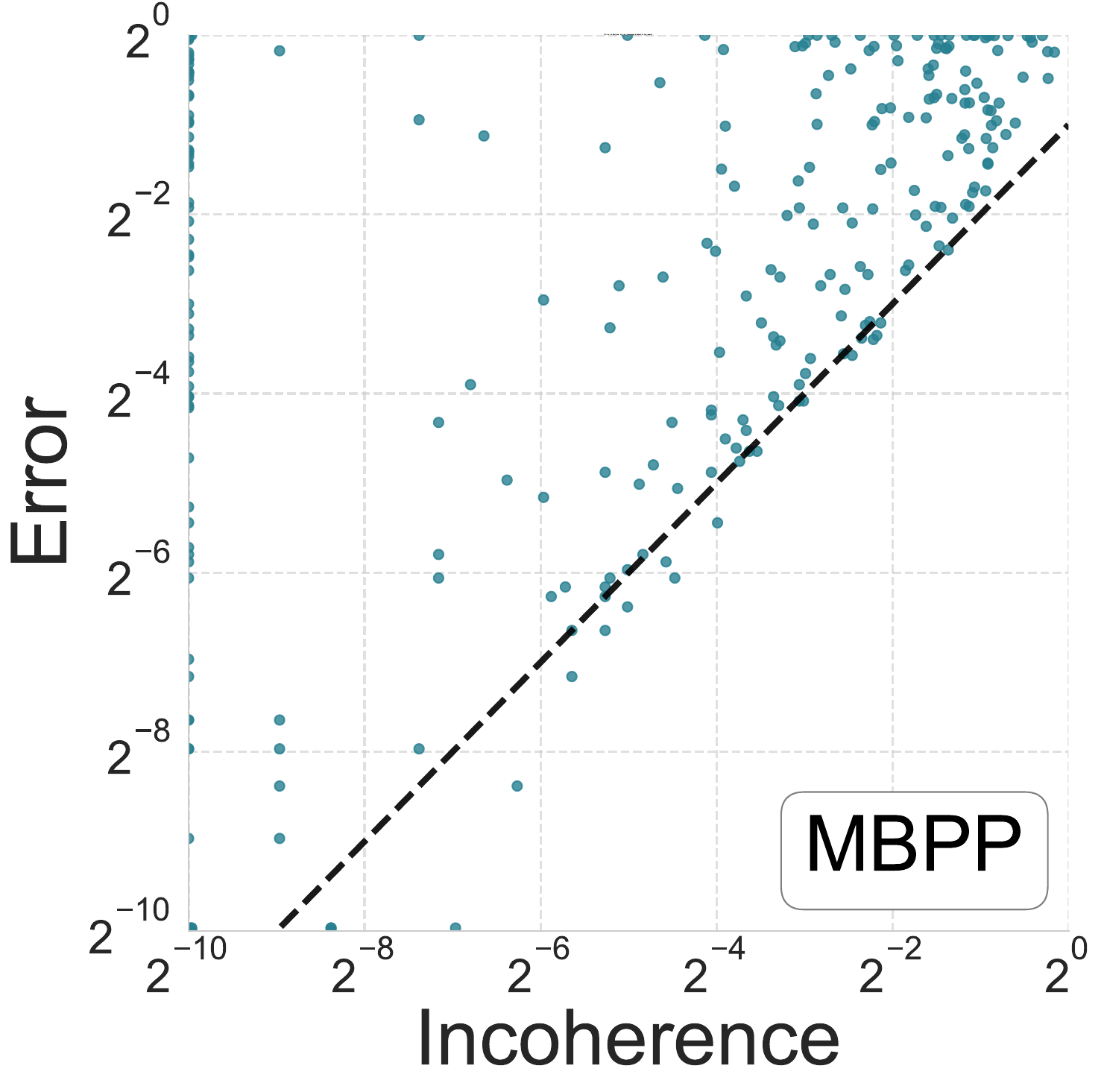}
\includegraphics[width=0.49\columnwidth, clip]{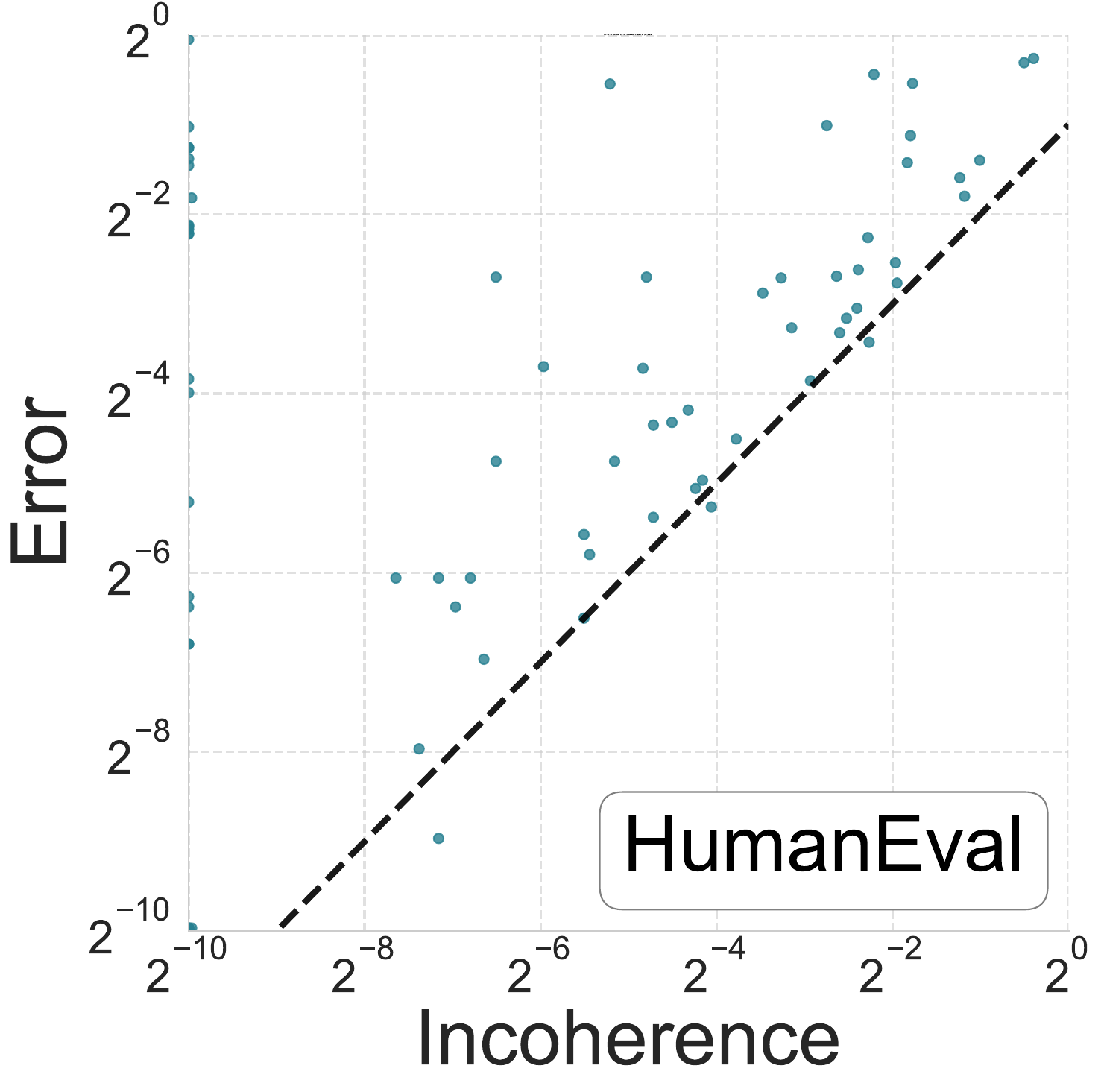}
\caption{Relationship between error and incoherence for GPT-4o on MBPP and HumanEval benchmarks. The dashed line demonstrates the inequality in Theorem~\ref{theorem:incoherence-inequality}.}
\label{fig:rq1-relationship}
\end{figure}
\subsection*{RQ-2. Agreement on LLM Ranking}
\label{section:empirical-results:RQ2}

If the incoherence- and error-based rankings of LLMs agree, we can reliably substitute one measure for the other. We could remove the requirement to provide painstakingly manually-written ground-truth implementations for every programming task when constructing new code generation benchmarks. We could \emph{mitigate critical threats to validity} in benchmarking of new LLM-based code generation systems, such as overfitting or data leakage.

Figure~\ref{fig:mainresult} (on the title page) shows a scatter plot of the ranking of all 16 LLMs in terms of the number of projects with zero error (i.e., \passone; cf. Eqn.~(\ref{eq:passone})) versus the ranking of the same LLMs in terms of the number of projects with our non-zero oracle-less incoherence measure. 

\textbf{Results}. We observe a \emph{very strong agreement} on the rankings. Concretely, $\rho = 0.92$ and $\rho =0.94$ for MBPP and HumanEval, respectively, at a significance level $p<0.0001$. The rankings are close to the diagonal. 

\result{\textbf{RQ-2}. \emph{An oracle-based evaluation can be reliably substituted by an incoherence-based evaluation.
Specifically, there is a \emph{very strong agreement} between the rankings of LLMs in terms of the proportion of programming tasks that are considered correct (a)~via the lack of a pointwise difference with a ground truth implementation (as in, \passone) versus (b)~via the lack of a pointwise difference between two randomly generated solutions.}}

\begin{table}\footnotesize\centering%
\begin{tabular}{@{}r@{ }l|rc|rc@{}}
&& \multicolumn{2}{c|}{\emph{MBPP}} & \multicolumn{2}{c}{\emph{HumanEval}}\\\hline
& &\multicolumn{1}{c}{\textbf{Expenses}}& \textbf{Detection} & \multicolumn{1}{c}{\textbf{Expenses}} & \textbf{Detection}\\
&  & \textbf{(in USD)} & \textbf{Rate} & \textbf{(in USD)} & \textbf{Rate}\\\hline
$m=$&$1$ & 0.8730 & 0.0000 & 0.4436 &  0.0000 \\
$m=$&$2$ & 1.7557 & 0.3974 & 0.8904 &  0.3333 \\
$m=$&$5$ & 4.3992 & 0.6357 & 2.2189 &  0.5846 \\
$m=$&$10$ & 8.8138 & 0.7243 & 4.4530 &  0.7042 \\
$m=$&$25$ & 22.0332 & 0.7742 & 11.1055 & 0.8267 \\
$m=$&$50$ & 43.9539 & 0.8105 & 22.2106 & 0.8182 \\\hline
\multicolumn{6}{@{}p{0.96\columnwidth}@{}}{\vspace{-0.15cm}\emph{\scriptsize (a) Detection rate and LLM costs as the query budget, i.e., the number $m$ of programs generated by $\coder(d)$ increases.}\vspace{0.1cm}}\\\hline
$t=$&$0.2$ & 8.8138 & 0.5422 & 4.4530 & 0.5556 \\
$t=$&$0.6$ & 8.8174 & 0.7148 & 4.4840 & 0.7286 \\
$t=$&$1$ & 9.0657 & 0.8050 & 4.5254 & 0.7600 \\\hline
\multicolumn{6}{@{}p{0.96\columnwidth}@{}}{\vspace{-0.15cm}\emph{\scriptsize (b) Detection rate and LLM costs as temperature $t$ of $\coder(d)$ increases.}\vspace{0.1cm}}\\\hline
$n=$&$100$ & 8.8174  & 0.6811 & 4.4840  & 0.6615 \\
$n=$&$1000$ & 8.8174  & 0.7148 & 4.4840  & 0.7286 \\
$n=$&$2000$ & 8.8174  & 0.7200 & 4.4840  & 0.7083 \\
$n=$&$5000$ & 8.8174  & 0.7355 & 4.4840  & 0.7222 \\
$n=$&$10000$ & 8.8174  & 0.7445 & 4.4840  & 0.7222 \\\hline
\multicolumn{6}{@{}p{0.96\columnwidth}@{}}{\vspace{-0.15cm}\emph{\scriptsize (c) Detection rate and LLM costs as the testing budget, i.e., the number $n$ of test inputs generated by $\gen(d)$ increases.}}
\end{tabular}
\caption{Results of our ablation study. We vary one value while keeping all others constant. $\coder$ is GPT-4o. Default number of programs per task: $m=10$. Default number of test inputs per task: $n=1000$. Default temperature: $t=0.6$.}
\label{tab:rq3}
\end{table} 

\vspace{-0.1cm}
\subsection*{RQ-3. Ablation}
\label{section:empirical-results:RQ3}

Table~\ref{tab:rq3} shows the results of our ablation study as we vary the LLM the number $m$ of generated programs, the number $n$ of generated test inputs, or the LLM's temperature $t$. A higher temperature increases the likelihood that the LLM samples a lower-probability token during next-token prediction.
We vary one parameter and keep all others constant ($m=10$, $n=1000$, $t=0.6$, GPT4-o; cf. \S\ref{section:experimental-setup}).

\textbf{Query budget} $m$. The detection rate increases with $m$. For instance, when changing $m$ from 10 to 50, the detection rate increases by 14--19\% from 0.7 to about 0.83 for HumanEval and from 0.72 to 0.82 for MBPP. The advantage comes at a substantial monetary cost. When changing $m$ from 10 to 50, our expenses increased by more than fivefold, e.g., from \$9 to \$44 for MBPP.
For HumanEval, we actually observe a slightly higher detection rate (0.83) at $m=25$, which we explain by the randomness of the sampling and test generation process. It might also indicate that the detection rate starts to saturate for larger values of $m$ (which we determined as uneconomical for us to test).
Another interesting observation is that just sampling a second program ($m=2$) already gives us a 0.33 to 0.4 detection rate.

\textbf{Temperature} $t$. Detection rate increases with $t$. For instance, when changing $t$ from 0.2 to 1.0, we see detection rate increase by 36--50\% from 0.54 to 0.81 for MBPP and from 0.56 to 0.76 for HumanEval. A high temperature induces a high output diversity, which seems to increase the LLM's incoherence, which serves us well in error detection.

\textbf{Test inputs} $n$. Detection rate increases with $n$. However, compared to the other hyperparameters, a substantial increase in the number of generated test inputs induces only a relatively small increase in detection rate.

\vspace{-0.2cm}
\result{\textbf{RQ-3}. \emph{Increasing $\coder$'s query budget $m$, $\gen$'s testing budget $n$, or the temperature $t$ also increases the detection rate. However, an x-fold increase in query budget comes at a greater-than x-fold increase in expenses.}}

\section{Threats to Validity}
\label{section:threats-to-validity}
As with any empirical study, there are threats to the validity of our results and conclusions. The first threat is to the \emph{external validity}, i.e., the extent to which our findings can be generalized. As the subjects of our study, we selected LLMs from all major LLM vendors that were top-performing according to code generation leaderboards. They represent the current state-of-the-art.
As the objects of our study, we selected the two most widely used code generation benchmarks, MBPP and HumanEval, to facilitate comparison with results in related research. However, the findings may not generalize to more complex programming tasks or programming languages other than Python, and we call on the community to replicate our experiments for their use cases. Beyond the empirical results we also formally prove certain properties of incoherence and its estimation in the general.

The second threat is to the internal validity, i.e., the extent to which the presented evidence supports our claims about cause and effect within the context of our study. In the benchmarks, the task description may be ambiguous or the ground-truth implementation incorrect \cite{qualityassessment}. We use popular well-scrutinized benchmarks. From MBPP, we chose the best-quality, hand-curated set of tasks.
\toolname may contain bugs itself, but we release all our scripts and data for the community to scrutinize.

\section{Perspective}
We believe that our incoherence-based perspective gives rise to a proliferation of new techniques built for \emph{trustworthy code generation with probabilistic guarantees}.

In this paper, we discuss the formal estimation of the correctness of an LLM-generated program when there is \emph{no automated mechanism} to decide whether a program is correct or not (e.g., a formal specification or a ground-truth implementation). We model the generated program as a random variable drawn from an unknown distribution induced by the coder (e.g., the LLM). This opens the door for a probabilistic notion of correctness. Our measure, incoherence, formalizes the observation that, if two random programs for the same task disagree on the output for an input, at least one must be incorrect. We formally demonstrate how the coder's error on a task has a lower bound in terms of the coder's incoherence on that task and empirically observe that a non-zero incoherence detects more than two-thirds of the incorrect programs. 
Since incoherence does not depend 
on model internals, it can be applied broadly across LLMs of any kind and even to probabilistic systems like LLM-agents.

An incoherence-based evaluation of the code generation capabilities of multiple LLMs also addresses several open challenges of the traditional ground-truth-based \passone evaluation. The existing process of curating large coding benchmarks with correct human-generated ground-truth implementations is labour-intensive, error-prone, and subject to future data leakage issues \cite{memorizing}. Incoherence paves the way for evaluations on a substantially larger scale, basically on a stream of programming tasks.

\section*{Acknowledgments}
We thank the anonymous reviewers for their constructive feedback
and for helping us improve this paper. This research is partially
funded by the European Union. Views and opinions expressed are
however those of the author(s) only and do not necessarily reflect
those of the European Union or the European Research Council
Executive Agency. Neither the European Union nor the granting
authority can be held responsible for them. This work is supported
by ERC grant (Project AT\_SCALE, 101179366). This research is also
partially funded by the ENS Paris-Saclay who graciously supported the internship
of the first author at the MPI for Security and Privacy.


\bibliography{aaai25}


\appendix

\section{Implementation Details}\label{sec:impl}
Figure~\ref{fig:diagram} provides a procedural overview of our Python implementation, called \toolname. For every programming \emph{task} in a dataset, for every \emph{coder} (i.e., LLM), repeated $M$ times to produce $M$ \emph{candidate functions}, \toolname uses the LLM vendor-provided application programming interface (API) to generate a Python program that implements the natural language specification $d$ that is provided with the task. The coder's prompt further contains instructions to adhere to a given function signature. To optimize throughput, \toolname dispatches LLM queries in parallel whenever possible, with a fallback to sequential execution in the event of API rate limiting or errors. By default, we generate $m=10$ candidate functions for each task. We vary $m\in \{1,2,5,10,25,50\}$ in the ablation study (RQ3).

For every programming task, once for the empirical error and once for the empirical incoherence, \toolname uses the \emph{input generator} to generate $n$ inputs for the generated \emph{candidate functions} using the task-provided \emph{seed inputs}. The test generator implements the budget-constrained $\coder_m$ presented in Section~\ref{sec:fixedbudget} and the type-aware mutation-based fuzzing method $\gen$ introduced in EvalPlus~\cite{EvalPlus} and discussed in Section~\ref{sec:inputgeneration}. We provide a list of supported mutations in the appendix (Table~\ref{tab:mutation_types}). We note that non-zero incoherence implies non-zero error for all $\gen$ (Thm.~\ref{theorem:incoherence-inequality} and Eqn.~(\ref{eqn:seven})). We also note that point-wise error [Eqn.~(\ref{eqn:pntwiseError})] and incoherence [Eqn.~(\ref{eqn:seven})] are defined w.r.t. the same distribution induced by $\gen$.
To ensure robustness, all executions, whether for compilation, incoherence estimation, or error estimation, are subject to a 60-second timeout. 
The empirical error is computed using the task-provided ground-truth (GT) function $f^*_d$. By default, we generate $n=1000$ test inputs.  We vary $n\in \{100,1000,2000,5000,10000\}$ in RQ3.

\begin{figure*}[t]
    \centering
    \includegraphics[width=\textwidth]{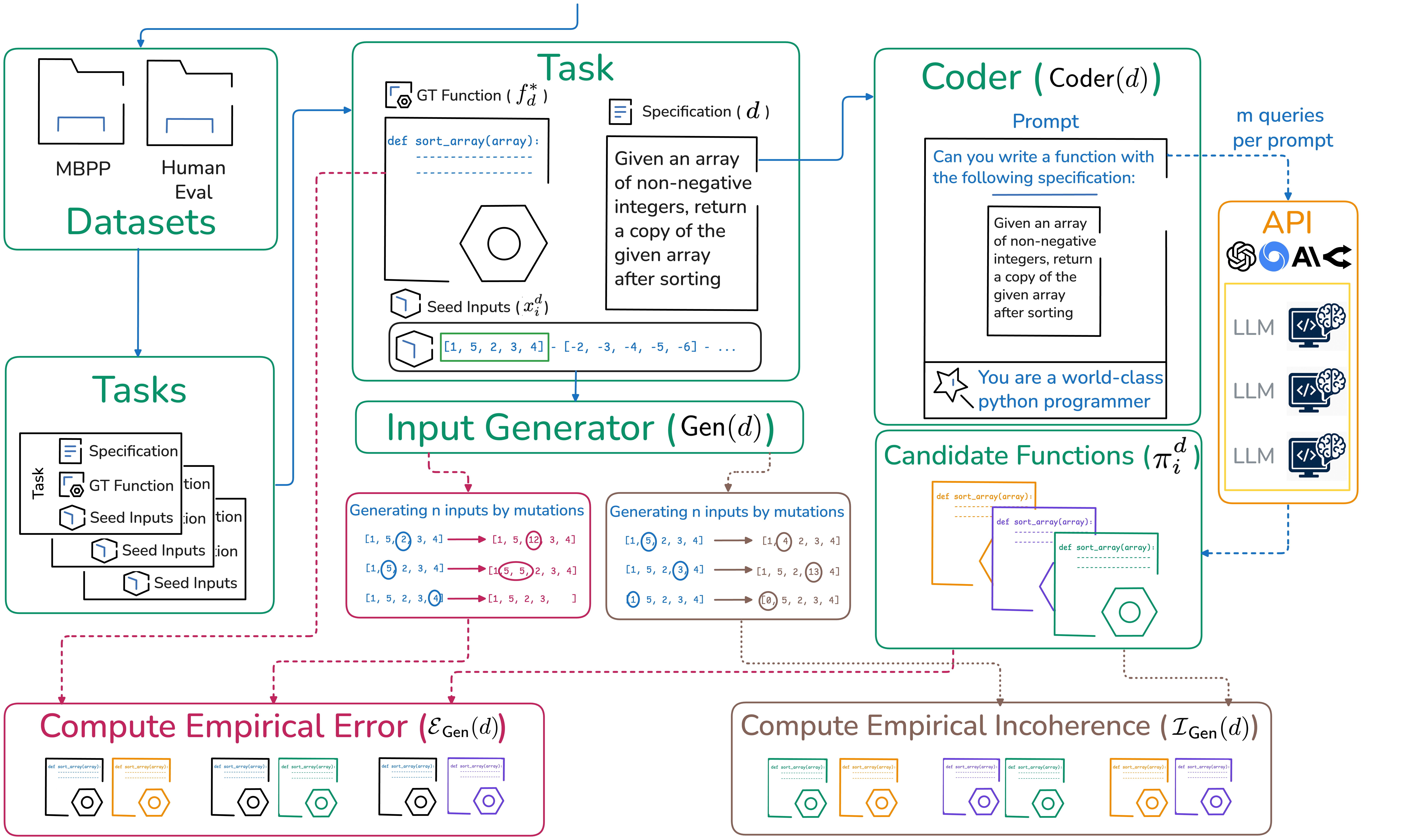}
    \caption{Workflow of our implementation \toolname.}
    \label{fig:diagram}
\end{figure*}

We publish all data, the analysis, and the virtual experimental infrastructure to reproduce our experiments:\\\artifact 
\subsection{Infrastructure}
\label{section:experimental-setup:infrastructure}
We used a single machine equipped with an AMD EPYC 7713P 64-Core Processor (128 threads), 251 GB of RAM, running Ubuntu 22.04.5 LTS 64-bit.

\section{Proof of Pointwise Incoherence Inequality}
\label{appendix:proof-pointwise-incoherence-ineq}

For readers who prefer a fully explicit probability-theoretic formalization, we describe here the underlying probability space used implicitly in the main text.

We consider a probability space $(\Omega, \mathcal{F}, \Prob)$, where $\Omega$ is the sample space, $\mathcal{F}$ the $\sigma$-algebra of events, and $\Prob$ the probability measure. Random variables are measurable functions from $\Omega$ to their codomain. For a mathematical expression $expr[X_1,\dots,X_n]$ involving random variables, we write
\[
\{ expr \} := \{ \omega \in \Omega \mid expr[X_1(\omega),\dots,X_n(\omega)]\text{ holds} \}.
\]

Theorem~\ref{theorem:incoherence-inequality} states that :
\[
\forall \gen, \forall d \in \Descr,\quad \Inc_\gen(d) \leq 2 \times \Err_\gen(d).
\]

\begin{proof}
Let $\gen$ be an input generation system and let $d\in \Descr$.
Let $\ProgRV^d_1, \ProgRV^d_2 \overset{iid}{\sim} \coder(d)$ represent two independently sampled programs from $\coder$ output distribution for task $d$ and let $X \sim \gen(d)$ represent an input sampled from $\gen$ for task $d$.

For every sample $\omega \in \SampleSpace$, let $f_1 = \sem{\ProgRV^d_1(\omega)}$, $f_2 = \sem{\ProgRV^d_2(\omega)}$ and $x = X(\omega)$ be the corresponding outcomes of $\sem{\ProgRV^d_1}, \sem{\ProgRV^d_2}$ and $X$. Then we have the following implication
\begin{equation*}
    f_1(x) \neq f_2(x) \implies (f_1(x) \neq \gtfun{d}(x) \lor f_2(x) \neq \gtfun{d}(x)).
\end{equation*}
Taking corresponding probabilistic events
\begin{align*}
    &E_{\ProgRV^d_1,\ProgRV^d_2} := \{\sem{\ProgRV^d_1}(X)\neq\sem{\ProgRV^d_2}(X)\} \\
    &E_{\ProgRV^d_1,\gtfun{d}} := \{\sem{\ProgRV^d_1}(X)\neq \gtfun{d}(X)\} \\
    &E_{\ProgRV^d_2,\gtfun{d}} := \{\sem{\ProgRV^d_2}(X)\neq \gtfun{d}(X)\}
\end{align*}
We thus have
\begin{equation*}
    E_{\ProgRV^d_1,\ProgRV^d_2} \subseteq E_{\ProgRV^d_1,\gtfun{d}} \cup E_{\ProgRV^d_2,\gtfun{d}}.
\end{equation*}
Therefore
\begin{align*}
    \Prob(E_{\ProgRV^d_1,\ProgRV^d_2}) & \leq \Prob(E_{\ProgRV^d_1,\gtfun{d}} \cup E_{\ProgRV^d_2,\gtfun{d}}) \\
    & \leq \Prob(E_{\ProgRV^d_1,\gtfun{d}}) + \Prob(E_{\ProgRV^d_2,\gtfun{d}}).
\end{align*}
Since by definition, $\Inc_\gen(d) = \Prob(E_{\ProgRV^d_1,\ProgRV^d_2})$ and $\Err_\gen(d) = \Prob(E_{\ProgRV^d_1,\gtfun{d}}) = \Prob(E_{\ProgRV^d_2,\gtfun{d}})$.
We thus obtain
\[
\Inc_\gen(d) \leq 2 \times \Err_\gen(d).
\]

\end{proof}

\section{Estimation of Incoherence and Error}
\label{appendix:estimation-inc-err}

\subsection{Monte Carlo Estimation of Pointwise Error}
\label{appendix:estimation-inc-err:monte-carlo-estimation-err}

Given $\coder$ a code generation system and provided $\gen$ an input generation system, when the ground truth implementation $\gtfun{d}$ is available, we can estimate the pointwise error $\Err_\gen(d)$ using a Monte Carlo procedure based on Definition~\eqref{eqn:pntwiseError} as illustrated in Algorithm~\ref{alg:point-err-estimation}.

\begin{algorithm}[tb]
\caption{Monte Carlo Estimation of $\Err_\gen(d)$}
\label{alg:point-err-estimation}
\textbf{Input}: $\coder$, $\gen$, task $d \in \Descr$, ground truth $\gtfun{d}$, error tolerance $\epsilon > 0$, failure probability $\delta > 0$\\
\textbf{Output}: $\bar{\Err}_\gen(d)$ estimate such that $|\bar{\Err}_\gen(d) - \Err_\gen(d)| \leq \epsilon$ with probability $\geq 1 - \delta$
\begin{algorithmic}[1]
\STATE Let $N = \left\lceil \frac{\log(2/\delta)}{2\epsilon^2} \right\rceil$
\FOR{$i = 1$ to $N$}
    \STATE Sample $\prog \sim \coder(d)$
    \STATE Sample $x \sim \gen(d)$
    \STATE $e_i \gets \mathbb{I}\bigl(\sem{\prog}(x) \neq \gtfun{d}(x)\bigr)$
\ENDFOR
\STATE \textbf{return} $\bar{\Err}_\gen(d) = \frac{1}{N} \sum_{i=1}^N e_i$
\end{algorithmic}
\end{algorithm}

\paragraph{Correctness Guarantee.}
Each $e_i$ is a Bernoulli random variable with $\mathbb{E}[e_i] = \Err_\gen(d)$. By Hoeffding’s inequality:
\[
\Prob(|\bar{\Err}_\gen(d) - \Err_\gen(d)| \geq \epsilon) \leq 2\exp(-2N\epsilon^2).
\]
Thus, with $N \ge \frac{\log(2/\delta)}{2\epsilon^2}$, we obtain the desired PAC guarantee.

\subsection{Monte Carlo Estimation of Pointwise Incoherence}
\label{appendix:estimation-inc-err:monte-carlo-estimation-inc}

Given $\coder$ a code generation system and provided $\gen$ an input generation system, even if the ground truth implementation $\gtfun{d}$ is unavailable, we can estimate the pointwise incoherence $\Inc_\gen(d)$ using a Monte Carlo procedure based on Definition~\eqref{eqn:pntwiseIncoherence} as illustrated in Algorithm~\ref{alg:point-inc-estimation}.

\begin{algorithm}[tb]
\caption{Monte Carlo Estimation of $\Inc_\gen(d)$}
\label{alg:point-inc-estimation}
\textbf{Input}: $\coder$, $\gen$, task $d \in \Descr$, error tolerance $\epsilon > 0$, failure probability $\delta > 0$\\
\textbf{Output}: $\bar{\Inc}_\gen(d)$ estimate such that $|\bar{\Inc}_\gen(d) - \Inc_\gen(d)| \leq \epsilon$ with probability $\geq 1 - \delta$
\begin{algorithmic}[1]
\STATE Let $N = \left\lceil \frac{\log(2/\delta)}{2\epsilon^2} \right\rceil$
\FOR{$i = 1$ to $N$}
    \STATE Sample $\prog_1, \prog_2 \overset{iid}{\sim} \coder(d)$
    \STATE Sample $x \sim \gen(d)$
    \STATE $d_i \gets \mathbb{I}\bigl(\sem{\prog_1}(x) \neq \sem{\prog_2}(x)\bigr)$
\ENDFOR
\STATE \textbf{return} $\bar{\Inc}_\gen(d) = \frac{1}{N} \sum_{i=1}^N d_i$
\end{algorithmic}
\end{algorithm}

\paragraph{Correctness Guarantee.}
Each $d_i$ is a Bernoulli random variable with $\mathbb{E}[d_i] = \Inc_\gen(d)$. Applying Hoeffding’s inequality:
\[
\Prob(|\bar{\Inc}_\gen(d) - \Inc_\gen(d)| \geq \epsilon) \leq 2\exp(-2N\epsilon^2).
\]
Hence, with $N \ge \frac{\log(2/\delta)}{2\epsilon^2}$, we obtain the desired PAC guarantee.

\subsection{PAC Detection of Nonzero Incoherence}
\label{appendix:estimation-inc-err:pac-detection}

Given $\coder$ a code generation system and provided $\gen$ an input generation system, we may want to detect with high confidence whether the model exhibits nonzero pointwise incoherence on a given task $d \in \Descr$. We present a probabilistically sound decision procedure based on repeated disagreement tests, as illustrated in Algorithm~\ref{alg:pac-incoherence-detection}.

\begin{algorithm}[tb]
\caption{PAC Detection of non-zero $\Inc_\gen(d)$}
\label{alg:pac-incoherence-detection}
\textbf{Input}: $\coder$, $\gen$, task $d \in \Descr$, incoherence threshold $\epsilon > 0$, confidence parameter $\delta > 0$\\
\textbf{Output}: \texttt{true} if $\Inc_\gen(d) > 0$ is detected, otherwise \texttt{false}
\begin{algorithmic}[1]
\STATE Let $N = \left\lceil \frac{\log(\delta)}{\log(1 - \epsilon)} \right\rceil$
\FOR{$i = 1$ to $N$}
    \STATE Sample $\prog_1, \prog_2 \overset{iid}{\sim} \coder(d)$
    \STATE Sample $x \sim \gen(d)$
    \IF{$\sem{\prog_1}(x) \neq \sem{\prog_2}(x)$}
        \STATE \textbf{return} \texttt{true}
    \ENDIF
\ENDFOR
\STATE \textbf{return} \texttt{false}
\end{algorithmic}
\end{algorithm}

\paragraph{Correctness Guarantee.}
Let $\epsilon > 0$ and $\delta > 0$. Then Algorithm~\ref{alg:pac-incoherence-detection}, when run with parameters $\epsilon$ and $\delta$, satisfies the following:

\begin{itemize}
    \item If $\Inc_\gen(d) = 0$, the algorithm always returns \texttt{false}.
    \item If $\Inc_\gen(d) \ge \epsilon$, the algorithm returns \texttt{true} with probability at least $1 - \delta$.
\end{itemize}

In other words, the algorithm detects non-zero incoherence with confidence at least $1 - \delta$, and it never returns false positives.

\begin{proof}
Let $\epsilon > 0$ and $\delta > 0$, and let $N = \left\lceil \frac{\log(\delta)}{\log(1 - \epsilon)} \right\rceil$. Then Algorithm~\ref{alg:pac-incoherence-detection} satisfies the following:

\begin{itemize}
    \item \textbf{No false positives.} If $\Inc_\gen(d) = 0$, then all sampled programs agree on all inputs almost surely. Therefore, no disagreement can ever be observed, and the algorithm always returns \texttt{false}.
    
    \item \textbf{False negative probability $\le \delta$.} Suppose instead that $\Inc_\gen(d) \ge \epsilon$. Then in each trial of the algorithm, the probability of observing a disagreement is at least $\epsilon$. Since the trials are independent, the probability that all $N$ trials fail to detect a disagreement is at most:
    \[
    (1 - \epsilon)^N.
    \]
    By the choice of $N$, this is at most $\delta$:
    \[
    (1 - \epsilon)^N \le \delta.
    \]
    Hence, the probability that the algorithm detects a disagreement and returns \texttt{true} is at least $1 - \delta$.
\end{itemize}
\end{proof}

\paragraph{Implication for Error Detection.}
Although the algorithm only checks for non-zero \emph{incoherence}, we can derive a guarantee for non-zero \emph{error} via Theorem~\ref{theorem:incoherence-inequality}, which states:
\[
\Inc_\gen(d) \le 2 \cdot \Err_\gen(d).
\]
Thus, a detection of $\Inc_\gen(d) \ge \epsilon$ implies that $\Err_\gen(d) \ge \epsilon / 2$. Consequently, this detection method provides a statistically sound way to identify non-zero errors without requiring access to a ground truth implementation.

\section{Functional Incoherence}
\label{appendix:functional-incoherence}

While pointwise incoherence measures this divergence over specific inputs, functional incoherence extends the idea to the full input space. That is, two implementations are functionally incoherent if they disagree on \emph{any} input. This notion aligns with the classic definition of program equivalence.

Let $\ProgRV_1^d, \ProgRV_2^d \overset{iid}{\sim} \coder(d)$ be two independently sampled programs for a task $d \in \Descr$. Let $\sem{\ProgRV^d_1}, \sem{\ProgRV^d_2}$ denote their functional interpretations. We define the \textbf{functional incoherence} of a code generation system $\coder$ for a task $d$ is the probability that two independently sampled implementations are not functionally equivalent:
\begin{equation}
\Inc(d) := \Prob(\sem{\ProgRV_1^d} \neq \sem{\ProgRV_2^d})
\end{equation}

This notion captures global behavioral disagreement between sampled programs. In practice, while direct evaluation of functional equivalence is undecidable in general, functional incoherence may be conservatively approximated via testing or symbolic execution.

Functional incoherence can be seen as the limiting form of pointwise incoherence. Specifically, $\Inc_\gen(d)$ (as defined in the main text) estimates incoherence over a distribution of inputs, whereas $\Inc(d)$ considers disagreement over the entire input space.

Importantly, just as pointwise incoherence lower-bounds pointwise error, we can show that functional incoherence provides a lower bound on functional error.

\subsection{Functional Incoherence as a Lower Bound on Error}
\label{appendix:functional-incoherence:inequality}

\begin{theorem}[Functional Incoherence Inequality]
\label{theorem:functional-incoherence-inequality}
For any task $d \in \Descr$, functional incoherence provides a lower bound on functional error:
\[
\Inc(d) \leq 2 \times \Err(d).
\]
\end{theorem}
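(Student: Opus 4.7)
The plan is to mirror the proof of Theorem~\ref{theorem:incoherence-inequality} almost verbatim, replacing pointwise comparisons of output values with comparisons of entire functional interpretations. The triangle-style set inclusion that drove the pointwise argument becomes simpler here, because we compare functions directly rather than their values at a randomly sampled input; in particular, there is no auxiliary random variable $X$ to carry through.

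First I would fix $d \in \Descr$ and let $\ProgRV^d_1, \ProgRV^d_2 \overset{iid}{\sim} \coder(d)$. The key observation is a purely logical one: if $\sem{\ProgRV^d_1} = \gtfun{d}$ and $\sem{\ProgRV^d_2} = \gtfun{d}$ simultaneously, then by transitivity of equality $\sem{\ProgRV^d_1} = \sem{\ProgRV^d_2}$. Contrapositively, $\sem{\ProgRV^d_1} \neq \sem{\ProgRV^d_2}$ implies $\sem{\ProgRV^d_1} \neq \gtfun{d}$ or $\sem{\ProgRV^d_2} \neq \gtfun{d}$. Translating this into the probability space, I would introduce the events $E_{1,2} := \{\sem{\ProgRV^d_1} \neq \sem{\ProgRV^d_2}\}$, $E_{1,\ast} := \{\sem{\ProgRV^d_1} \neq \gtfun{d}\}$, and $E_{2,\ast} := \{\sem{\ProgRV^d_2} \neq \gtfun{d}\}$, and establish the inclusion $E_{1,2} \subseteq E_{1,\ast} \cup E_{2,\ast}$ by showing the implication at the level of samples $\omega \in \SampleSpace$, exactly as in Appendix~\ref{appendix:proof-pointwise-incoherence-ineq} but stripped of the evaluation at $x = X(\omega)$.

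Second I would apply monotonicity of $\Prob$ together with the union bound to obtain $\Prob(E_{1,2}) \leq \Prob(E_{1,\ast}) + \Prob(E_{2,\ast})$. Since $\ProgRV^d_1$ and $\ProgRV^d_2$ are each marginally distributed as $\coder(d)$, both $\Prob(E_{1,\ast})$ and $\Prob(E_{2,\ast})$ equal $\Err(d)$ by Equation~(\ref{eqn:one}), and $\Prob(E_{1,2}) = \Inc(d)$ by the definition of functional incoherence in Appendix~\ref{appendix:functional-incoherence}. Combining these identities delivers $\Inc(d) \leq 2\,\Err(d)$.

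I do not anticipate a genuine technical obstacle; the one point worth briefly flagging is that the comparisons $\sem{\ProgRV^d_1} \neq \sem{\ProgRV^d_2}$ and $\sem{\ProgRV^d_i} \neq \gtfun{d}$ are inequalities \emph{between functions}, i.e., they hold iff the two functions disagree on at least one input. Under the probability space $(\SampleSpace, \mathcal{P}(\SampleSpace), \Prob)$ fixed in Section~\ref{section:problem-setting:notation}, every subset of $\SampleSpace$ is measurable, so $E_{1,2}$, $E_{1,\ast}$, and $E_{2,\ast}$ are automatically events, and the union bound applies without caveat. The remainder of the argument is then a one-line application of independence-free probability manipulation.
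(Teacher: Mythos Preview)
Your proposal is correct and follows essentially the same approach as the paper: establish the set inclusion $\{\sem{\ProgRV^d_1}\neq\sem{\ProgRV^d_2}\}\subseteq\{\sem{\ProgRV^d_1}\neq\gtfun{d}\}\cup\{\sem{\ProgRV^d_2}\neq\gtfun{d}\}$ via the ``at least one must be wrong'' observation, then apply the union bound and identify the two right-hand probabilities with $\Err(d)$. Your additional remarks on measurability under the power-set $\sigma$-algebra are a welcome touch but not something the paper spells out.
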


\begin{proof}
Let $\gtfun{d}$ denote the ground truth implementation for task $d$.

Let $f_1 := \sem{\ProgRV_1^d}$ and $f_2 := \sem{\ProgRV_2^d}$ be two independently sampled implementations from $\coder(d)$. Then:
\[
\{f_1 \neq f_2\} \subseteq \{f_1 \neq f^\ast_d\} \cup \{f_2 \neq f^\ast_d\}
\]
This holds because if two functions differ, at least one must differ from the ground truth. Taking probabilities:
\begin{align*}
\Prob(f_1 \neq f_2) & \leq \Prob(f_1 \neq f^\ast_d) + \Prob(f_2 \neq f^\ast_d)
\end{align*}
Thus:
\[
\Inc(d) \leq 2 \cdot \Err(d)
\]
\end{proof}

\paragraph{Discussion.}
This result mirrors the pointwise inequality established in the main text (Theorem~\ref{theorem:incoherence-inequality}) and shows that disagreement between implementations is a rigorous signal of potential failure—even in the absence of ground truth.

While $\Err(d)$ requires access to $\gtfun{d}$, $\Inc(d)$ is fully estimable from samples of $\coder$. This makes functional incoherence a valuable unsupervised proxy for reliability, particularly in settings where model outputs must be audited without labeled data.

\clearpage
\begin{table*}[ht]\footnotesize\centering
\caption{Performance of 16 LLMs on two benchmarks. The final row in each benchmark section reports the mean performance across all 16 models.}
\begin{tabular}{lc|cccc}
\toprule
&\textbf{Mean} &\textbf{Mean} &\textbf{Spearman}& \textbf{Detection} & \textbf{Undetected}\\
\textbf{Model}& \textbf{Error} & \textbf{Incoherence} & \textbf{Correlation} & \textbf{Rate}	& \textbf{Mean Error}\\
\midrule
\textbf{MBPP}                      &        &        &        &        &       \\
GPT-4o                             & 0.2773 & 0.1123 & 0.6105 & 0.7243 & 0.1638\\
Claude 4 Opus (2025/05/14)         & 0.2715 & 0.0583 & 0.4508 & 0.4815 & 0.2140\\
Gemini 2.5 Pro (preview 05/06)     & 0.2960 & 0.0995 & 0.5276 & 0.6866 & 0.2071\\
LLaMA 4 Maverick 17B               & 0.2980 & 0.0680 & 0.4291 & 0.4741 & 0.2352\\
Claude 4 Sonnet (2025/04/14)       & 0.2693 & 0.0452 & 0.4151 & 0.4016 & 0.2150\\
DeepSeek-V3 (Mar 2024)             & 0.2659 & 0.1185 & 0.6112 & 0.7406 & 0.1555\\
GPT-o4 Mini                        & 0.3109 & 0.1045 & 0.5646 & 0.7543 & 0.1784\\
Mistral 8B                         & 0.3741 & 0.1641 & 0.5892 & 0.7037 & 0.2107\\
LLaMA 3 70B Instruct               & 0.3574 & 0.1689 & 0.5419 & 0.8139 & 0.2288\\
DeepSeek-Coder R1                  & 0.2286 & 0.0811 & 0.5650 & 0.6453 & 0.1640\\
Gemini 2.5 Flash (preview 05/20)   & 0.2949 & 0.0936 & 0.5216 & 0.7122 & 0.2183\\
Gemini 2.0 Flash Lite              & 0.2913 & 0.1097 & 0.6237 & 0.6782 & 0.1586\\
LLaMA 3 8B Instruct                & 0.4022 & 0.2978 & 0.6509 & 0.9427 & 0.1544\\
GPT-4 Turbo                        & 0.2825 & 0.1195 & 0.6186 & 0.7249 & 0.1556\\
GPT-4                              & 0.2906 & 0.1511 & 0.6619 & 0.7840 & 0.1570\\
GPT-3.5 Turbo                      & 0.3041 & 0.1334 & 0.6115 & 0.7027 & 0.1698\\
\textbf{MBPP (Mean)}               & 0.3009 & 0.1203 & 0.5621 & 0.6857 & 0.1866\\
\midrule
\textbf{HumanEval}                 &        &        &        &        &        \\
GPT-4o                             & 0.0927 & 0.0483 & 0.7181 & 0.7042 & 0.0460\\
Claude 4 Opus (2025/05/14)         & 0.0632 & 0.0067 & 0.4041 & 0.3000 & 0.0601\\
Gemini 2.5 Pro (preview 05/06)     & 0.0763 & 0.0295 & 0.7171 & 0.7188 & 0.0417\\
LLaMA 4 Maverick 17B               & 0.0876 & 0.0367 & 0.6242 & 0.5397 & 0.0583\\
Claude 4 Sonnet                    & 0.0605 & 0.0076 & 0.4098 & 0.2778 & 0.0533\\
DeepSeek-V3 (Mar 2024)             & 0.0781 & 0.0235 & 0.5346 & 0.4576 & 0.0552\\
GPT-o4 Mini                        & 0.0893 & 0.0388 & 0.7092 & 0.7051 & 0.0420\\
Mistral 8B                         & 0.1585 & 0.0911 & 0.7282 & 0.7381 & 0.0737\\
LLaMA 3 70B Instruct               & 0.1223 & 0.0928 & 0.8173 & 0.8395 & 0.0314\\
DeepSeek-Coder R1                  & 0.0724 & 0.0356 & 0.7994 & 0.7538 & 0.0180\\
Gemini 2.5 Flash (preview 05/20)   & 0.0796 & 0.0347 & 0.7584 & 0.7538 & 0.0339\\
Gemini 2.0 Flash Lite              & 0.1165 & 0.0343 & 0.4829 & 0.4118 & 0.0851\\
LLaMA 3 8B Instruct                & 0.2140 & 0.1954 & 0.8890 & 0.9245 & 0.0261\\
GPT-4 Turbo                        & 0.0912 & 0.0477 & 0.7528 & 0.7606 & 0.0398\\
GPT-4                              & 0.1144 & 0.0823 & 0.8127 & 0.8471 & 0.0424\\
GPT-3.5 Turbo                      & 0.1635 & 0.0906 & 0.8192 & 0.8537 & 0.0458\\
\textbf{HumanEval (Mean)}          & 0.1050 & 0.0560 & 0.6861 & 0.6616 & 0.0471\\
\bottomrule
\end{tabular}
\label{tab:rq1_full}
\end{table*}
\begin{table}[t]
    \centering
    \caption{Interpretation of Spearman's $\rho$, based on thresholds from \citet{schober2018correlation}.}
    \label{tab:spearman-interpretation}
    \begin{tabular}{c|l}
        \toprule
        \textbf{Spearman's $\rho$} & \textbf{Interpretation} \\
        \midrule
        0.00 -- 0.09 & Negligible correlation \\
        0.10 -- 0.39 & Weak correlation \\
        0.40 -- 0.69 & Moderate correlation \\
        0.70 -- 0.89 & Strong correlation \\
        0.90 -- 1.00 & Very strong correlation \\
        \bottomrule
    \end{tabular}
\end{table}
\begin{table}
    \caption{List of basic type-aware mutations over input $x$.}
    \label{tab:mutation_types}
    \centering
    \begin{tabular}{ll|ll}
    \toprule
    \textbf{Object Type} & \textbf{Mutation} & \textbf{Object Type} & \textbf{Mutation} \\
    \midrule
    \texttt{int}, \texttt{float} & Add  ($\pm$1, $\pm$10, random) 
        & \texttt{NoneType} & \texttt{None} \\
    \midrule
    \texttt{bool} & Random boolean (True or False)
        & \texttt{user-defined} & Shallow copy + mutate fields recursively \\
    \midrule
    \multirow{6}{*}{\texttt{str}}  
        & Insert char at random index & \multirow{5}{*}{\texttt{list}, \texttt{set}} & Insert random element at random index/key \\
    & Delete char at random index & & Insert dummy element at random index/key \\
    & Replace char w/ random ASCII  & & Swap two elements \\
    & Truncate random substring &\texttt{dict}, \texttt{tuple} & Duplicate entries randomly \\
    & Extend random substring &  & Insert entry at random index/key \\
    & Duplicate random substring & & Delete element at random index/key \\
    \bottomrule
    \end{tabular}
\end{table}

\end{document}